%%%%%%%%%%%%%%%%%%%%%%%%%%%%%%%%%%%%%%%%%%%%%%%%%%%%%%%%%%%%%%%%%%%%%%%%%%%%%%
%
\documentclass[11pt,letterpaper]{article}
\usepackage{times,mathptmx}
\DeclareMathAlphabet{\mathcal}{OMS}{cmsy}{m}{n}
\usepackage{fullpage}
\usepackage{amsmath,amsfonts,amssymb,amsthm,amscd}
\usepackage{hyperref,cite,url}
\hypersetup{pdfpagemode=UseNone,pdfstartview=}
%
% spacing

\newcommand{\vs}{\vspace{1.5mm}}
%
% theorem
\theoremstyle{plain} % plain
\newtheorem{theorem}{Theorem}[section]
\newtheorem{lemma}[theorem]{Lemma}

\theoremstyle{definition} % definition
\newtheorem{definition}{Definition}[section]

\theoremstyle{remark} % remark

%
% macros

\newcommand{\G}{\mathbb{G}}
\newcommand{\Z}{\mathbb{Z}}

\newcommand{\mc}[1]{\mathcal{#1}}
\newcommand{\tb}[1]{\textbf{#1}}
\newcommand{\lb}{\linebreak[0]}

\title{On the Analysis of the Revocable-Storage Identity-Based
Encryption Scheme}

\author{
    Kwangsu Lee\footnote{Sejong University, Seoul, Korea.
    Email: \texttt{kwangsu@sejong.ac.kr}.}
}

\date{}

\begin{document}

\maketitle

\begin{abstract}
Cloud computing can provide a flexible way to effectively share data among
multiple users since it can overcome the time and location constraints of
computing resource usage. However, the users of cloud computing are still
reluctant to share sensitive data to a cloud server since the cloud server
should be treated as an untrusted entity. In order to support secure and
efficient data sharing in cloud computing environment, Wei et al. recently
extended the concept of identity-based encryption (IBE) to support key
revocation and ciphertext update functionalities, and proposed a
revocable-storage identity-based encryption (RS-IBE) scheme. In this paper,
we show that the RS-IBE scheme of Wei et al. does not satisfy the
correctness property of RS-IBE. In addition, we propose a method to modify
the existing RS-IBE scheme to be a correct and secure scheme.
\end{abstract}

\vs \noindent {\bf Keywords:} Cloud computing, Identity-based encryption,
Revocable-storage, Ciphertext update.

\section{Introduction}

Cloud computing is a new paradigm of computing system that provides computing
resources such as computing power or data storage according to the need of
users. The advantage of cloud computing is that cloud service users can use
their computing resources as a service with low cost at any time from
anywhere through the Internet. Many technology companies provide various
types of cloud services. The main difference between traditional server
computing and cloud computing is that a cloud service provider can no longer
be regarded as a trusted entity. In other words, the cloud service provider
should be treated as a honest-but-curious adversary.

A typical application of cloud computing is to securely share data among a
large number of users. In this system, the data confidentiality should be
provided because the cloud service provider is no longer a trusted entity. In
addition, if a user's credential is expired or the user's private key is
compromised, then a proper revocation method should be provided to handle
this user. Furthermore, even if a revoked user tries to access past data
stored in the cloud server through collusion attacks, the security of data
should be guaranteed. Therefore, a secure data sharing system in the cloud
server should consider various security issues described above.

Recently, a revocable-storage identity-based encryption (RS-IBE) scheme for
secure data sharing in cloud storage was proposed by Wei et al.
\cite{WeiLH18}. The basic idea of this RS-IBE scheme is to modify an
identity-based encryption (IBE) scheme to additionally supports the key
revocation and ciphertext update functionalities. In particularly, they used
the IBE scheme of Waters for the underlying IBE scheme and the tree-based key
revocation scheme of Boldyreva et al. \cite{BoldyrevaGK08} for key
revocation. Additionally, they modified their scheme to support efficient
ciphertext update by following the idea of forward-secure cryptographic
systems.

In this paper, we show that there is a serious problem in the RS-IBE scheme
of Wei et al. That is, if a ciphertext generated at time $T$ is updated to
another ciphertext with time $T+1$ by the ciphertext update algorithm, then
this updated ciphertext with time $T+1$ cannot be decrypted by using a
decryption key with time $T+1$. The reason of this decryption failure problem
is that the decryption algorithm uses a random value which is different from
the random value used to encrypt a message if a ciphertext is updated. A more
detailed explanation of this problem is given in the later part of this work.
To remedy this problem, we propose a method to modify the RS-IBE scheme of
Wei et al. to be a secure scheme without the decryption failure problem.

The organization of the paper is as follows. In Section 2, we review the
definition and scheme of RS-IBE proposed by Wei et al. In Section 3, we point
out that there is a correctness problem in Wei et al.'s RS-IBE scheme. In
Section 4, we propose a method to solve this problem by. Finally, we conclude
the paper in Section 5.

\section{Review of the RS-IBE Scheme}

In this section, we review the definition of RS-IBE including the correctness
property and describe the RS-IBE scheme of Wei et al.

\subsection{Revocable-Storage Identity-Based Encryption}

\begin{definition}[Revocable-Storage Identity-Based Encryption]
A revocable-storage identity-based encryption (RS-IBE) scheme consists of
eight algorithms \tb{Setup}, \tb{GenKey}, \tb{UpdateKey}, \tb{DeriveDK},
\tb{Encrypt}, \tb{UpdateCT}, \tb{Decrypt}, and \tb{Revoke}, which are defined
as follows:
\begin{description}
\item \tb{Setup}($1^\lambda, N_{max}, T_{max}$). The setup algorithm takes
    as input a security parameter $1^{\lambda}$, the maximum number of
    users $N_{max}$, and the total number of time periods $T_{max}$. It
    outputs a master key $MK$ and public parameters $PP$.

\item \tb{GenKey}($ID, MK, PP$). The key generation algorithm takes as
    input an identity $ID$, the master key $MK$, and the public parameters
    $PP$. It outputs a private key $SK_{ID}$.

\item \tb{UpdateKey}($T, RL, MK, ST, PP$). The key update algorithm takes
    as input update time $T$, a revocation list $RL$, the master key $MK$,
    a state $ST$, and the public parameters $PP$. It outputs a key update
    $KU_T$.

\item \tb{DeriveDK}($SK_{ID}, KU_T, PP$). The decryption key derivation
    algorithm takes as input a private key $SK_{ID}$, a key update $KU_T$,
    and the public parameters $PP$. It outputs a decryption key
    $DK_{ID,T}$.

\item \tb{Encrypt}($ID, T, M, PP$). The encryption algorithm takes as input
    an identity $ID$, time $T$, a message $M$, and the public parameters
    $PP$. It outputs a ciphertext $CT_{ID,T}$.

\item \tb{UpdateCT}($CT_{ID,T}, T', PP$). The ciphertext update algorithm
    takes as a ciphertext $CT_{ID,T}$, update time $T'$, and the public
    parameters $PP$. It outputs an updated ciphertext $CT_{ID,T'}$.

\item \tb{Decrypt}($CT_{ID,T'}, DK_{ID,T}, PP$). The decryption algorithm
    takes as input a ciphertext $CT_{ID,T'}$, a decryption key $DK_{ID,T}$,
    and the public parameters $PP$. It outputs an encrypted message $M$ or
    $\perp$.

\item \tb{Revoke}($ID, T, RL, ST, PP$). The revocation algorithm takes as
    input an identity $ID$, revoked time $T$, a revocation list $RL$, and a
    state $ST$. It outputs an updated revocation list $RL$.
\end{description}
The correctness property of RS-IBE is defined as follows: For all $MK, PP$
generated by \tb{Setup}, any $SK_{ID}$ generated by $\tb{GenKey}(ID, MK, PP)$
for any $ID$, any $KU_T$ generated by $\tb{UpdateKey}(T, RL, MK, ST, PP)$ for
any $T, RL$, $CT_{ID,T'}$ generated by $\tb{Encrypt}(ID, T', M, PP)$ for any
$ID, T', M$, if $ID$ is not revoked at time $T$ in $RL$, then it is required
that $DK_{ID,T}$ can be derived by $\tb{DeriveKey}(SK_{ID}, KU_T, PP)$ and
\begin{itemize}
\item If $T' \leq T$, then $\tb{Decrypt} (CT_{ID,T'}, DK_{ID,T}, PP) = M$.

\item If $T' > T$, then $\tb{Decrypt} (CT_{ID,T'}, DK_{ID,T}, PP) = \perp$.
\end{itemize}
Additionally, it is required that the ciphertext distribution of
$\tb{UpdateCT} (CT_{ID,T}, T', PP)$ is statistically equal to that of
$\tb{Encrypt} (ID, T', M, PP)$.
\end{definition}

\subsection{Wei et al.'s RS-IBE Construction}

To provide key revocation functionality, the RS-IBE scheme of Wei et al.
\cite{WeiLH18} follows the binary tree-based broadcast encryption method
proposed by Boldyreva et al. \cite{BoldyrevaGK08}. Let $\mc{BT}$ be a binary
tree for handling key revocation. A user is randomly assigned to a leaf node
in this $\mc{BT}$. At this time, the private key $SK_{ID}$ of a user with an
identity $ID$ is associated with the set of nodes defined by
$\tb{Path}(x_{ID})$ which is the set of path nodes from the root node to the
leaf node $x_{ID}$, and a key update $KU_T$ at time $T$ is associated with
the set of covering nodes defined by $\tb{KUNodes}(\mc{BT}, RL, T)$ which is
the set of nodes that covers all non-revoked leaf nodes at time $T$. If the
leaf node (or the private key) of a user $ID$ is not revoked at time $T$,
then there is a common node $x$ satisfying $\tilde{x} = \tb{Path}(x_{ID})
\cap \tb{KUNodes}(\mc{BT}, RL, T)$. The decryption of a ciphertext at time
$T$ can be possible by using the private key element and key update element
corresponding to the node $\tilde{x}$. For the detailed description of
$\tb{KUNodes}(\mc{BT}, RL, T)$, please refer the work of Boldyreva et al.
\cite{BoldyrevaGK08}.

To provide ciphertext update functionality, this RS-IBE scheme uses the
binary tree idea of Canetti et al. \cite{CanettiHK03} used to build
forward-secure encryption schemes. Note that the binary tree idea for time
management was widely used in other RS-ABE schemes \cite{SahaiSW12,
LeeCLPY13,LeeCLPY17,Lee16}. Let $\mc{ET}$ be a binary tree to handle time in
a ciphertext. In this case, each time is sequentially allocated to a leaf
node in $\mc{ET}$ from left to right. In this case, $\tb{CTNodes}(\mc{ET},
T)$ is defined as $\tb{RightSibling}(\tb{Path}(v_T)) \setminus
\tb{Path}(\tb{Parent}(v_T)) \cup \{ v_T \}$ where $\tb{RightSibling}(S)$ is a
set of $\tb{RightChild}( \tb{Parent}(v))$ of any node $v \in S$
\cite{SahaiSW12,LeeCLPY13}. Note that Wei et al. wrongly defined
$\tb{CTNodes}(\mc{ET}, T) = \{ v | \tb{Parent}(v) \in \tb{Path}(v_T) \text{
and } v \notin \tb{Path}(v_T) \} \cup \{ v_T \}$ because this (wrongly
defined) set can include the left child node of $\tb{Path}(v_T)$, which will
allow access to the past time node. To support ciphertext update, a
ciphertext is constructed to have ciphertext elements associated with
$\tb{CTNodes}(\mc{ET}, T)$. The main property of $\tb{CTNodes}$ is that if $T
< T'$, a ciphertext with $\tb{CTNodes}(\mc{ET}, T)$ can be updated to a
ciphertext with $\tb{CTNodes}(\mc{ET}, T')$ because for any node $v' \in
\tb{CTNodes}(\mc{ET}, T')$ there is a node $v''$ that matches to
$\tb{CTNodes}(\mc{ET}, T) \cap \tb{Path}(v')$ and the ciphertext component
for $v''$ can be delegated to be a ciphertext for $v'$. For other properties
of $\tb{CTNodes}$, please refer the work of Sahai et al. \cite{SahaiSW12}.

The RS-IBE scheme of Wei et al. is described as follows:

\begin{description}
\item [\tb{Setup}($1^\lambda, N_{max}, T_{max}$):] Let $\lambda$ be the
    security parameter, $N_{max} = 2^n$ be the maximum number of users, and
    $T_{max} = 2^{\ell}$ be the total number of time periods. It chooses a
    bilinear group $(p, \G, \G_T, e)$ with a prime order $p$.
    It selects random $g, g_2 \in \G$ and $\alpha \in \Z_p^*$, and sets
    $g_1 = g^{\alpha}$. It also chooses random $u_0, u_1, \ldots, u_n, h_0,
    h_1, \ldots, h_\ell \in \G$ and defines $F_u(ID) = u_0 \prod_{i=1}^{n}
    u_i^{ID[i]}$, $F_h(T) = h_0 \prod_{j=1}^{\ell} h_j^{T[j]}$ where
    $ID[i]$ and $T[i]$ are the $i$th bit of $ID$ and $T$ respectively.
    It sets a binary tree $\mc{BT}$ with $N_{max}$ number of leaf nodes and
    sets a revocation list $RL = \emptyset$, a state $ST = \mc{BT}$. It
    outputs a master key $MK = g_2^{\alpha}$, and public parameters $PP =
    \big( (p, \G, \G_T, e), g, g_1, g_2, \{ u_i \}_{i=0}^{n}, \{ h_i
    \}_{i=0}^{\ell} \big)$.

\item [\tb{GenKey}($ID, MK, ST, PP$):] It assigns $ID$ to a leaf node
    $x_{ID} \in \mc{BT}$. For each node $x \in \tb{Path}(x_{ID})$, it
    performs:
    1) It fetches $g_{x,0}$ from the node $x$. If $g_{x,0}$ is not defined
    before, then it chooses random $g_{x,0} \in \G$ and stores the pair
    $(g_{x,0}, g_{x,1} = g_2 \cdot g_{x,0}^{-1})$ in the node $x$.
    2) It chooses random $r_{x,0} \in \Z_p^*$ and obtains $SK_{ID,x} =
    \big( K_{x,0} = g_{x,0}^{\alpha} F_u(ID)^{r_{x,0}}, K_{x,1} =
    g^{r_{x,0}} \big)$.
    Finally, it outputs a private key $SK_{ID} = \big( \{ (x, SK_{ID,x})
    \}_{x \in \tb{Path}(x_{ID})} \big)$ and an updated $ST = \mc{BT}$.

\item [\tb{UpdateKey}($T, RL, MK, ST, PP$):] For each node $x \in
    \tb{KUNodes}(\mc{BT}, RL, T)$, it performs:
    1) It fetches $g_{x,1}$ from the node $x$. If $g_{x,1}$ is not defined,
    then it sets the value similar to the key generation algorithm.
    2) It chooses random $r_{x,1} \in \Z_p^*$ and obtains $KU_{T,x} = \big(
    U_0 = g_{x,1}^{\alpha} F_h(T)^{r_{x,1}}, U_1 = g^{r_{x,1}} \big)$.
    Finally, it outputs a key update $KU_T = \big( \{ (x, KU_{T,x}) \}_{x
    \in \tb{KUNodes}(\mc{BT}, RL, T)} \big)$.

\item [\tb{DeriveDK}($SK_{ID}, KU_T, PP$):] It finds a common node $x$ in
    both $SK_{ID}$ and $KU_T$. If it fails to find, then it returns
    $\perp$. Note that If $ID$ was not revoked during the time period $T$,
    then there exist a node $x \in \tb{Path}(\mc{BT}, x_{ID}) \cap
    \tb{KUNodes}(\mc{BT}, RL, T)$.
    For this node $x$, it retrieves $SK_{ID,x} = (K_{x,0}, K_{x,1})$ and
    $KU_{T,x} = (U_{x,0}, U_{x,1})$ from $SK_{ID}$ and $KU_T$ respectively.
    It chooses random $r_0, r_1 \in \Z_p^*$ and outputs a decryption key
    $DK_{ID,T} = \big( D_1 = K_{x,0} \cdot U_{x,0} \cdot F_u(ID)^{r_0}
    \cdot F_h(T)^{r_1}, D_2 = K_{x,1} \cdot g^{r_0}, D_3 = U_{x,1} \cdot
    g^{r_1} \big)$.

\item [\tb{Encrypt}($ID, T, M, PP$):] Let $\mc{ET}$ be a binary tree for
    time periods and $v_T$ be a leaf node associated with $T$ in $\mc{ET}$.
    It chooses random $s \in \Z_p^*$ and computes $C_0 = e(g_1, g_2)^s
    \cdot M, C_1 = g^{-s}, C_2 = F_u(ID)^s$.
    For each node $v \in \tb{CTNodes}(\mc{ET}, T)$, it performs: 1) It
    chooses random $s_v \in \Z_p^*$ and sets $s_v = s$ if $v = v_T$. 2) It
    calculates $CT_v = \big( C_{v,0} = \big( h_0 \prod_{j=1}^{|b_v|}
    h_j^{b_v[j]} \big)^{s_v}, C_{v,|b_v|+1} = h_{|b_v|+1}^{s_v}, \ldots,
    C_{v,\ell} = h_{\ell}^{s_v} \big)$.
    Finally, it outputs a ciphertext $CT_{ID,T} = \big( ID, T, C_0, C_1,
    C_2, \{ CT_v \}_{v \in \tb{CTNodes}(\mc{ET}, T)} \big)$.

\item [\tb{UpdateCT}($CT_{ID,T}, T', PP$):] Let $v_{T}, v_{T'}$ be leaf
    nodes in $\mc{ET}$ assigned to $T, T'$, respectively. If $T' < T$, then
    it returns $\perp$ to indicate that $T'$ is invalid.
    It chooses random $s' \in \Z_p^*$ and computes $C'_0 = C_0 \cdot e(g_1,
    g_2)^{s'}, C'_1 = C_1 \cdot g^{-s'}, C'_2 = C_2 \cdot F_u(ID)^{s'}$.
    For each node $v' \in \tb{CTNodes}(\mc{ET}, T')$, it performs: 1) It
    find a node $v \in \tb{CTNodes}(\mc{ET}, T)$ such that $b_v$ is a
    prefix of $b_{v'}$. 2) It chooses random $s_{v'} \in \Z_p^*$ and sets
    $s_{v'} = s'$ if $v' = v_{T'}$. 3) It calculates $CT_{v'} = \big(
    C_{v',0} = C_{v,0} \cdot \prod_{j=|b_v|+1}^{|b_{v'}|} C_{v,j} \cdot
    \big( h_0 \prod_{j=1}^{|b_{v'}|} h_j^{b_{v'}[j]} \big)^{s_{v'}},
    C_{v',|b_{v'}|+1} = C_{v,|b_{v'}|+1} \cdot h_{|b_{v'}|+1}^{s_{v'}},
    \ldots, C_{v',|b_{v'}|+\ell} = C_{v,\ell} \cdot h_{\ell}^{s_{v'}}
    \big)$.
    Finally, it outputs an updated ciphertext $CT_{ID, T'} = \big( ID, T',
    \lb C'_0, C'_1, C'_2, \{ CT_{v'} \}_{v' \in \tb{CTNodes}(\mc{ET},
    T')}\big)$.

\item [\tb{Decrypt}($CT_{ID,T}, DK_{ID,T'}, PP$):] Let $DK_{ID,T'} = ( D_1,
    D_2, D_3 )$.
    If $T' < T$, then it returns $\perp$. Otherwise, it updates $CT_{ID,T}$
    to obtains $CT_{ID,T'} = ( ID, T', C'_0, C'_1, C'_2, \{ CT_{v'} \}_{v'
    \in \tb{CTNodes}(\mc{ET}, T')} )$ where $CT_{v'} = (C_{v',0}, \ldots,
    C_{v',\ell})$ by running $\tb{UpdateCT}(CT_{ID,T}, T', PP)$.
    It outputs a message $M$ by computing $C'_0 \cdot e(C'_1, D_1) \cdot
    e(C'_2, D_2) \cdot e(C_{v_{T'},0}, D_3)$ where $v_{T'}$ is a leaf node
    associated with $T'$.

\item [\tb{Revoke}($ID, T, RL, ST$):] It adds $(ID, T)$ to $RL$ and returns
    the updated $RL$.
\end{description}

Wei et al. claimed that above RS-IBE scheme is correct and secure if the
$\ell$-BDHE assumption holds.

\section{Analysis of the RS-IBE Scheme}

In this section, we show that the above RS-IBE scheme is not correct since
the decryption fails if the ciphertext time $T$ is less than the decryption
key time $T'$.

\begin{lemma}  \label{lem:tree-node}
Let $\mc{ET}$ be a binary tree for time periods and $v_T, v_{T'}$ be leaf
nodes associated with time $T, T'$, respectively. If $T + 1 \leq T'$, then
there exists a node $\tilde{v} = \tb{CTNodes}(\mc{ET}, T) \cap
\tb{Path}(v_{T'})$ but $v_T \neq \tilde{v}$. That is, $v_T, \tilde{v} \in
\tb{CTNodes}(\mc{ET}, T)$, $v$ is an ancestor node of $v_{T'}$, and $v_T \neq
\tilde{v}$.
\end{lemma}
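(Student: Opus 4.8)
The plan is to reduce the lemma to a purely structural property of the cover $\tb{CTNodes}(\mc{ET}, T)$ and then read off the conclusion. Recall that $\tb{CTNodes}(\mc{ET}, T) = \tb{RightSibling}(\tb{Path}(v_T)) \setminus \tb{Path}(\tb{Parent}(v_T)) \cup \{v_T\}$. First I would isolate the two facts I actually use, which are the standard properties of this set established by Sahai et al. \cite{SahaiSW12} and used in \cite{LeeCLPY13}: (i) the elements of $\tb{CTNodes}(\mc{ET}, T)$ form an antichain, i.e. no element is an ancestor of another; and (ii) the subtrees rooted at these elements partition exactly the leaves $\{v_T, v_{T+1}, \ldots\}$ associated with times $\geq T$, so that every such leaf has exactly one inclusive ancestor in $\tb{CTNodes}(\mc{ET}, T)$. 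The present leaf $v_T$ is the element adjoined explicitly.

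Next I would apply property (ii) to the leaf $v_{T'}$. From $T + 1 \le T'$ we have $T < T'$, so $v_{T'}$ is one of the leaves covered by $\tb{CTNodes}(\mc{ET}, T)$; hence there is exactly one node $\tilde{v} \in \tb{CTNodes}(\mc{ET}, T)$ lying on the root-to-leaf path $\tb{Path}(v_{T'})$. This gives both the existence and the uniqueness needed to write $\{\tilde{v}\} = \tb{CTNodes}(\mc{ET}, T) \cap \tb{Path}(v_{T'})$, and by construction $\tilde{v}$ is an ancestor of $v_{T'}$ (possibly $v_{T'}$ itself). I would emphasize uniqueness, since it is exactly what makes the prefix selection ``find a node $v \in \tb{CTNodes}(\mc{ET}, T)$ such that $b_v$ is a prefix of $b_{v'}$'' in $\tb{UpdateCT}$ unambiguous.

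Finally I would separate $\tilde{v}$ from $v_T$. Since $\tilde{v} \in \tb{Path}(v_{T'})$, it is either $v_{T'}$ or a proper ancestor of $v_{T'}$. In the first case $\tilde{v} = v_{T'} \neq v_T$ because $T \neq T'$. In the second case $\tilde{v}$ is an internal node, so it cannot equal the leaf $v_T$. Either way $\tilde{v} \neq v_T$, which is the assertion of the lemma.

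The only genuine work is justifying properties (i) and (ii) directly from the set-theoretic definition of $\tb{CTNodes}$: one has to check that deleting $\tb{Path}(\tb{Parent}(v_T))$ from $\tb{RightSibling}(\tb{Path}(v_T))$ retains precisely the ``off-path'' right-sibling subtree roots that, together with $v_T$, tile the future leaves and form an antichain. I expect this to be the main obstacle, though it is routine and can simply be cited from \cite{SahaiSW12,LeeCLPY13}; once the covering/antichain picture is granted, the last two steps use only that a partition into disjoint subtrees meets each path in a single node and that a leaf is a minimal node.
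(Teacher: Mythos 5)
Your proposal is correct and follows essentially the same route as the paper: existence of $\tilde{v}$ is obtained by citing the structural covering property of $\tb{CTNodes}$ (the paper invokes its ``main property'' with $v' = v_{T'}$, you invoke the equivalent leaf-covering formulation from \cite{SahaiSW12,LeeCLPY13}), and the distinctness $\tilde{v} \neq v_T$ is argued identically via the case split that $\tilde{v}$ is either $v_{T'}$ itself (a leaf different from $v_T$) or a proper ancestor, hence an internal node that cannot be the leaf $v_T$. Your added emphasis on uniqueness via the antichain property is a harmless refinement the paper leaves implicit.
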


\begin{proof}
By the main property of \tb{CTNodes},  we have that for any node $v' \in
\tb{CTNodes}(\mc{ET}, T')$ there is a common node $v''$ such that $v'' =
\tb{CTNodes}(\mc{ET}, T) \cap \tb{Path}(\mc{ET}, v')$ if $T+1 \leq T'$.
Therefore, for both nodes $v_T$ and $v_{T'}$ associated with time $T$ and
$T'$, there exists a node $\tilde{v} = \tb{CTNodes}(\mc{ET}, T) \cap
\tb{Path}(\mc{ET}, v_{T'}$. Now, Let's show that the node $v_T$ and the node
$\tilde{v}$ are different. In the given condition, $T+1 \leq T'$ is
established, and each time is sequentially assigned to a leaf node in
$\mc{ET}$. Therefore, two nodes $v_T, v_{T'}$ are different nodes since $T
\neq T'$ and they are assigned to leaf nodes. Since the node $\tilde{v}$
belongs to the path nodes $\tb{Path}(\mc{ET}, v_{T'})$, the node $\tilde{v}$
can never be a leaf node if $\tilde {v} \neq v_{T'}$. Therefore, $\tilde{v}
\neq v_T$ is established, since $v_T \neq v_{T'}$ and $v_T$ is a leaf node.
\end{proof}

\begin{theorem}
Let $CT_{ID,T}$ be a ciphertext associated with time $T$ and $DK_{ID, T'}$ be
a decryption key associated with time $T'$. If $T + 1 \leq T'$, then the
ciphertext $CT_{ID,T}$ cannot be decrypted by using the decryption key
$DK_{ID, T'}$ in the decryption algorithm.
\end{theorem}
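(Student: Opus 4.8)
The plan is to execute the decryption algorithm symbolically on the updated ciphertext and show that its output differs from $M$ by a non-trivial factor that fails to cancel. The structural input is Lemma~\ref{lem:tree-node}: since $T+1 \le T'$, the node of $\tb{CTNodes}(\mc{ET}, T)$ whose label is a prefix of $b_{v_{T'}}$ --- that is, the delegation ancestor that $\tb{UpdateCT}$ uses to build the leaf component $CT_{v_{T'}}$ --- is the node $\tilde v = \tb{CTNodes}(\mc{ET}, T) \cap \tb{Path}(v_{T'})$, and this node satisfies $\tilde v \neq v_T$. This is the crux of the whole argument: the leaf component for $v_{T'}$ is delegated from $\tilde v$ and \emph{not} from $v_T$, so it inherits the randomness $s_{\tilde v}$ that was chosen independently of the encryption randomness $s = s_{v_T}$.

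First I would track how the randomness propagates through $\tb{UpdateCT}$. The header components become $C'_0 = e(g_1,g_2)^{s+s'} M$, $C'_1 = g^{-(s+s')}$, and $C'_2 = F_u(ID)^{s+s'}$, so all three carry the exponent $s+s'$. For the leaf component, delegation from $\tilde v$ collapses the product $C_{\tilde v,0} \cdot \prod_j C_{\tilde v,j}$ into $F_h(T')^{s_{\tilde v}}$ (using that $b_{\tilde v}$ is a prefix of $b_{v_{T'}} = T'$ and that $v_{T'}$ is a leaf, so $|b_{v_{T'}}| = \ell$), after which the fresh term contributes $F_h(T')^{s'}$. Hence $C_{v_{T'},0} = F_h(T')^{s_{\tilde v}+s'}$, whose exponent is $s_{\tilde v}+s'$ rather than the $s+s'$ carried by the header. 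The mismatch between $s_{\tilde v}$ and $s$ is what the proof must expose.

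Next I would collapse the decryption key. Writing $\rho_0 = r_{x,0}+r_0$ and $\rho_1 = r_{x,1}+r_1$ and using $g_{x,0}\, g_{x,1} = g_2$, the key reduces to $D_1 = g_2^\alpha F_u(ID)^{\rho_0} F_h(T')^{\rho_1}$, $D_2 = g^{\rho_0}$, $D_3 = g^{\rho_1}$. Substituting into the decryption product $C'_0 \cdot e(C'_1,D_1) \cdot e(C'_2,D_2) \cdot e(C_{v_{T'},0},D_3)$, the $g_2^\alpha$ factor cancels the blinding term $e(g,g_2)^{\alpha(s+s')}$ inside $C'_0$, and the $F_u(ID)$ exponents cancel between $e(C'_1,D_1)$ and $e(C'_2,D_2)$. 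The two $F_h(T')$ exponents, however, do not match: $e(C'_1,D_1)$ contributes $-(s+s')\rho_1$ while $e(C_{v_{T'},0},D_3)$ contributes $(s_{\tilde v}+s')\rho_1$. The surviving factor is $e(g,F_h(T'))^{(s_{\tilde v}-s)\rho_1}$, so the algorithm returns $M \cdot e(g,F_h(T'))^{(s_{\tilde v}-s)\rho_1}$ in place of $M$.

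The conclusion follows since $s_{\tilde v}$ and $s$ are drawn independently and uniformly from $\Z_p^*$ with $\tilde v \neq v_T$ guaranteed by the lemma, so $s_{\tilde v} \neq s$ except with probability $1/(p-1)$; as $F_h(T')$ and $\rho_1$ are non-trivial with overwhelming probability, the leftover factor is not the identity and decryption yields the wrong value, contradicting correctness. The main obstacle is the bookkeeping in the second step --- pinning down precisely which randomness $\tb{UpdateCT}$ writes into $C_{v_{T'},0}$ --- and this is exactly where Lemma~\ref{lem:tree-node} is indispensable: without the guarantee that the delegation ancestor $\tilde v$ is distinct from $v_T$, one would wrongly assume the leaf component still carries $s$, in which case the $F_h(T')$ exponents would cancel and the failure would be invisible.
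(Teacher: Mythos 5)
Your proposal is correct and follows essentially the same route as the paper: invoke Lemma~\ref{lem:tree-node} to get $\tilde v \neq v_T$, then expand the decryption product symbolically to expose the surviving factor $e(g, F_h(T'))^{(s_{\tilde v}-s)\rho_1}$, which is non-trivial except with negligible probability. The only difference is that you carry the re-randomization terms ($s'$ from \textbf{UpdateCT} and the fresh exponents in \textbf{DeriveDK}) explicitly and show they cancel, whereas the paper simply declares that re-randomization does not affect the analysis and drops it.
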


\begin{proof}
To prove this theorem, we first analyze nodes in the binary tree $\mc{ET}$
which are associated with the ciphertext elements used in the decryption
algorithm and then analyze how the random exponents of these ciphertext
elements are constructed. After that, we argue that the decryption algorithm
will fail due to the random exponents of the ciphertext elements which are
used for decryption.

The decryption algorithm takes an original ciphertext $CT_{ID,T}$ and a
decryption key $DK_{ID,T'}$ as input. Then, it performs the \tb{UpdateCT}
algorithm to derive an updated ciphertext $CT_{ID,T'}$ since $T < T'$ . Next,
it uses the updated ciphertext element $C_{v_{T'},0}$, which is related to a
leaf node $v_{T'}$ associated with the time $T'$, for the decryption. Here,
the \tb{UpdateCT} algorithm finds the node $\tilde{v}$ which is an ancestor
node of $v_{T'}$ and belongs to the set $\tb{CTNodes}(\mc{ET}, T)$, and
delegates the ciphertext element $C_{\tilde{v},0}$ to obtain the ciphertext
element $C_{v_{T'},0}$. From the Lemma \ref{lem:tree-node}, we have that the
node $\tilde{v}$ which belongs to $\tb{CTNodes}(\mc{ET}, T)$ is not equal to
the node $v_T$ if $T+1 \leq T'$.

Now, we analyze random exponents in the original ciphertext $CT_{ID,T}$ which
are associated with the nodes in $\tb{CTNodes}(\mc{ET},T)$. The encryption
algorithm generates ciphertext elements for nodes in $\tb{CTNodes}(\mc{ET},
T)$. According to the encryption algorithm, for each node $v \in
\tb{CTNodes}(\mc{ET}, T)$, if $v = v_T$, then the same random exponent $s$
which is used for message encryption is used to generate $C_{v,0}$. If $v
\neq v_T$, then a new random exponent $s_v$ is selected to generate
$C_{v,0}$. However, since $\tilde{v} \neq v_T$ from the previous Lemma
\ref{lem:tree-node}, the random exponent $s_{\tilde{v}}$ is not equal to $s$
with high probability where $s_{\tilde{v}}$ is used for the node $\tilde{v}$.

The decryption algorithm finally calculates the following equation by using
the ciphertext elements and decryption elements.
    \begin{align*}
    &   C'_0 \cdot e(C'_1, D_1) \cdot e(C'_2, D_2) \cdot e(C_{v_{T'},0}, D_3) \\
    &=  M \cdot e(g_1, g_2)^s \cdot
        e(g^{-s}, g_2^{\alpha} F_u(ID)^r_0 F_h(T)^{r_1}) \cdot
        e(F_u(ID)^s, g^{r_0}) \cdot
        e(F_h(T)^{s_{\tilde{v}}}, g^{r_1}) \\
    &=  M \cdot
        e(g^{-s}, F_h(T)^{r_1}) \cdot
        e(F_h(T)^{s_{\tilde{v}}}, g^{r_1}) \\
    &=  M \cdot e(F_h(T), g)^{(s_{\tilde{v}} - s) r_1}.
    \end{align*}
Note that we ignored the re-randomization process since it does not affect
our analysis. In order to correctly obtain the message contained in the
ciphertext, it is required that $(s_{\tilde{v}} - s) \equiv 0 \mod p$ should
be satisfied. However, in the previous analysis, this relation cannot be
satisfied because the ciphertext element associated with the node $\tilde{v}$
of the original ciphertext uses a new random exponent $s_{\tilde{v}}$. Thus,
the decryption can be successful if $T = T'$, but the decryption always fails
except with negligible probability if $T+1 \leq T'$.
\end{proof}

\section{Modification to the RS-IBE Scheme}

In the previous section, we have shown that the RS-IBE scheme of Wei et al.
does not satisfy the correctness, which is the minimum requirement that the
cryptographic scheme must satisfy, due to the problem of random exponents in
the encryption algorithm. In this section, we propose a modification to the
RS-IBE scheme of Wei et al. to guarantee the correctness and the security.

A simple way to modify the RS-IBE scheme of Wei et al. \cite{WeiLH18} is to
force the ciphertext elements associated with $\tb{CTNodes}(\mc{ET}, T)$ to
use the same random exponent $s$ which is used to encrypt a message in the
ciphertext component $C_0$. In this case, there is no problem such that the
decryption algorithm fails when the ciphertext is updated since $s_v = s$ for
all nodes $v$. However, this simple modification does not lead to a secure
RS-IBE scheme. The reason is that if multiple nodes are provided with
ciphertext elements $\{ h_j^s \}$ associated with the same random $s$, it is
possible for anyone to use these elements to modify the original ciphertext
element with current time to derive another ciphertext element with past
time. This makes it possible for a revoked user to modify the ciphertext with
current time to obtain a ciphertext with past time to decrypt the original
ciphertext. Therefore, this simple method does not work.

A secure and efficient method to modify the RS-IBE scheme is to use a
cryptographic scheme that supports ciphertext update functionality. Lee et
al. \cite{LeeCLPY13,LeeCLPY17,Lee16} introduced the concept of self-updatable
encryption and proposed secure SUE schemes that efficiently handle ciphertext
updates. Thus, we can modify the RS-IBE scheme of Wei et al. to use the SUE
scheme for the ciphertext update components and key update components. The
secure SUE scheme proposed by Lee et al. supports correct decryption although
it uses different random exponents in ciphertext elements associated with
$\tb{CTNodes}(\mc{ET}, T)$. Additionally, this modified RS-IBE scheme can
reduce the number of ciphertext elements from $O(\log^2 T_{max})$ to $O(\log
T_{max})$ because of the efficiency of the SUE scheme. We also note that  the
existing RS-ABE scheme can be easily converted to an RS-IBE scheme by
changing the attribute set of ABE to the identity of IBE.

\section{Conclusion}

In this paper, we pointed out that the RS-IBE scheme of Wei et al. does not
provide the correctness property. The problem of the RS-IBE scheme was that
when a ciphertext with time $T$ is updated to a ciphertext with time $T+1$,
this updated ciphertext cannot be decrypted by using a decryption key with
time $T+1$. The main reason of this problem was that the random exponent of
the ciphertext element associated with a tree node corresponding to time
$T+1$ was not the same as the random exponent used to encrypt a message in
the ciphertext. This decryption problem cannot be solved in a simple way, so
we proposed a method to modify the previous RS-IBE scheme to be a secure and
efficient RS-IBE scheme using a self-updatable encryption scheme.

\bibliographystyle{plain}
\bibliography{analysis-of-rsibe}

\end{document}